
\documentclass[letterpaper, 10 pt, conference]{ieeeconf}


\IEEEoverridecommandlockouts
\overrideIEEEmargins


\usepackage[T1]{fontenc}
\usepackage[utf8]{inputenc}
\usepackage{comment}
\usepackage{amsmath,amssymb,amsfonts}
\usepackage{algorithmic}
\usepackage{graphicx}
\usepackage{textcomp}
\usepackage{xcolor}
\usepackage{hyperref}
\usepackage{mathtools}
\usepackage{soul}
\usepackage{afterpage}
\usepackage{url}

\makeatletter
\let\NAT@parse\undefined
\makeatother


\newtheorem{theorem}{Theorem}
\newtheorem{definition}{Definition}
\newtheorem{lemma}{Lemma}

\newtheorem{assumption}{Assumption}
\newtheorem{corollary}{Corollary}

\title{
On the utilization of Macroscopic Information \\ for String Stability of a  Vehicular Platoon
}

\author{Marco Mirabilio, Alessio Iovine, Elena De Santis, Maria Domenica Di Benedetto, Giordano Pola
\thanks{Marco Mirabilio, Elena De Santis, Maria Domenica Di Benedetto, Giordano Pola are with the Department of Information Engineering, Computer Science and
Mathematics, Center of Excellence DEWS, University of L’Aquila. (e-mail:marco.mirabilio@graduate.univaq.it, \{elena.desantis,mariadomenica.dibenedetto,giordano.pola\}@univaq.it)
}%
\thanks{Alessio Iovine is with the Electrical Engineering and Computer Sciences (EECS) Department at UC Berkeley, Berkeley, USA. E-mail: alessio@berkeley.edu, alessio.iovine@ieee.org.
}%
}
\begin{document}

\maketitle

\begin{color}{black}
\begin{abstract}
    The use of macroscopic information for the control of a vehicular platoon composed of autonomous vehicles is investigated. A mesoscopic control law is provided, and String Stability is proved by Lyapunov functions and Input-to-State Stability (ISS) concepts. Simulations are implemented in order to validate the controller and to show the efficacy of the proposed approach for mitigating traffic oscillations. 
    
    Keywords: String Stability, Input-to-State Stability, platoon control, mesoscopic modeling, Cooperative Adaptive Cruise Control.
\end{abstract}
\end{color}



\section{Introduction}

Nowadays, Vehicle-to-Infrastucture (V2I) and Vehicle-to-Vehicle (V2V) communication technologies are a reality in the smart transportation domain (see \cite{Uhlemann2017VTM}), and their utilization in Cooperative Adaptive Cruise Control (CACC) is widely expected  to improve traffic conditions (see \cite{Feng2019ARC}, \cite{Besselink2017TAC}, \cite{Johansson2015CSM}). Indeed, traffic jamming transition has been shown to strongly depend on the amplitude of fluctuations of the leading vehicle (see \cite{Nagatami2000PRE}), and interconnected autonomous vehicles are sensed to reduce stop-and-go waves propagation and traffic oscillations via the concept of String Stability  \cite{Feng2019ARC}, \cite{Swaroop1996StringStability},  \cite{Piccoli2018TRC}, \cite{DeSantis2006}.  
String Stability relies on the idea that disturbances acting on an agent of the cluster should not amplify backwards in the string. 
In the case of vehicular platooning, disturbances may be due to reference speed variation, external inputs acting on each vehicle, wrong modeling, etc. Several cases of information sharing have been considered for each leader-follower interaction, but a common characteristic is that some microscopic variables are always shared among the whole platoon, e.g. the acceleration of the platoon's leading vehicle (see \cite{Swaroop1996StringStability}) or its desired speed profile (see \cite{Besselink2017TAC}). It needs a V2V communication among the whole platoon, or a V2I bidirectional exchange of information. 
This paper analyses the benefits of the information propagation in a String Stability framework using both microscopic and macroscopic information for control purposes.
Each follower is here considered to correctly measure the distance and speed of its leading vehicles, using for example radar and LIDAR. The leader acceleration is communicated only to its follower. To improve control performance, macroscopic information is supposed to be obtained and communicated either from the road infrastructure (V2I) or from the whole platoon (V2V). \textcolor{black}{Both technologies have strengths and weaknesses. For example, V2V technology requires the macroscopic information to be propagated through the vehicles, and possibly estimated in a distributed manner by each one. On the other hand, V2I technology may provide a more reliable information at the cost of allocating several sensors along the way and computing the quantities in a centralized manner, implying a high computation request to the central computer. A thorough analysis of pros and cons of the two communication typologies is out of the scope of the present paper.} 

We target a platoon composed by autonomous vehicles implementing CACC, but the framework we propose is suitable for including autonomous vehicles implementing simple ACC or even human-driven vehicles as part of the platoon. 
The framework we propose is based on sharing macroscopic quantities along the platoon. 
%
%
The use of those quantities aims at increasing the ability of each car-following situation to counteract the disturbances by providing an anticipatory behaviour capable to absorb traffic jam. The idea of using macroscopic quantities, mainly the density, for microscopic traffic control has already been introduced in the literature, resulting in a mesoscopic modeling. In \cite{treiber_2013_book}, \cite{Zhu2016TITS} and in the references therein, the focus is on simulation aspects and real data analysis. Several works are now focusing on a mesoscopic modeling for traffic control purposes (see \cite{Johansson2019ECC}, \cite{Ferrara2018ITSC} \cite{DeSchutter2017TRC}, \cite{Iovine2015NAHS}, \cite{Iovine2015ADHSofficial}).


The controller we present in this paper considers  macroscopic information and ensures Asymptotic String Stability. 
The adopted nonlinear spacing policy relies on the family of nonlinear spacing strategies introduced in \cite{Yanakiev1995CDC} and \cite{Zhang2005CDC}. 
Similarly to \cite{Besselink2017TAC}, the result is obtained through an inductive method exploiting Input-to-State Stability (ISS). The main difference is that 
\textcolor{black}{ISS is ensured with respect to the leader-follower situation and the ahead vehicles of each predecessor.}  
Simulations show the improvements on the whole traffic throughput producing an anticipatory behaviour and oscillations reduction, and providing a better \textcolor{black}{transient harmonization} while maintaining String Stability properties.


The paper is organized as follows. Section \ref{sct:modeling} introduces the considered framework, while Section 
\ref{sct:control_tools} the needed control tools. Control laws are derived and stability analysis is performed in Section \ref{sct:mesoscopic_control}. Simulations are carried out in Section  \ref{sct:simulations}. Some concluding remarks are outlined in Section \ref{sct:conclusion}.

\vspace{0.2cm}
\textbf{Notation - } $\mathbb{R}^+$ is the set of non-negative real numbers. For a vector $x\in\mathbb{R}^n$, $|x|=\sqrt{x^T x}$ is its Euclidean norm. The $\mathcal{L}_\infty$ signal norm is defined as $|x(\cdot)|_\infty^{[t_0,t]} = \sup_{t_0\leq\tau\leq t}|x(\tau)|$. 
We refer to \cite{B_khalil_2002} for the definition of Lyapunov functions, and functions $\mathcal{K}$, $\mathcal{K}_\infty$ and  $\mathcal{KL}$.

\section{Modeling and String Stability definitions}\label{sct:modeling}

\subsection{Platoon modeling}
We consider a cluster of $N+1$ vehicles, $N\in\mathbb{N}$, proceeding in the same direction on a single lane road, as in Fig.\ref{fig:reference_framework}. We make the following 
\begin{assumption}
    All the vehicles are equal, with the same length $L\in\mathbb{R}^+$ and have the common goal of maintaininig a strictly positive distance among them, while keeping the same speed.
\end{assumption}
We denote with $i = 0$ the first vehicle of the platoon and with $\mathcal{I}_N = \{1,2,...,N\}$ the set of follower vehicles. The set including all the vehicles is $\mathcal{I}_N^0 = \mathcal{I}_N \cup \{0\}$.
\begin{figure}
    \centering
    \includegraphics[width = 1\columnwidth]{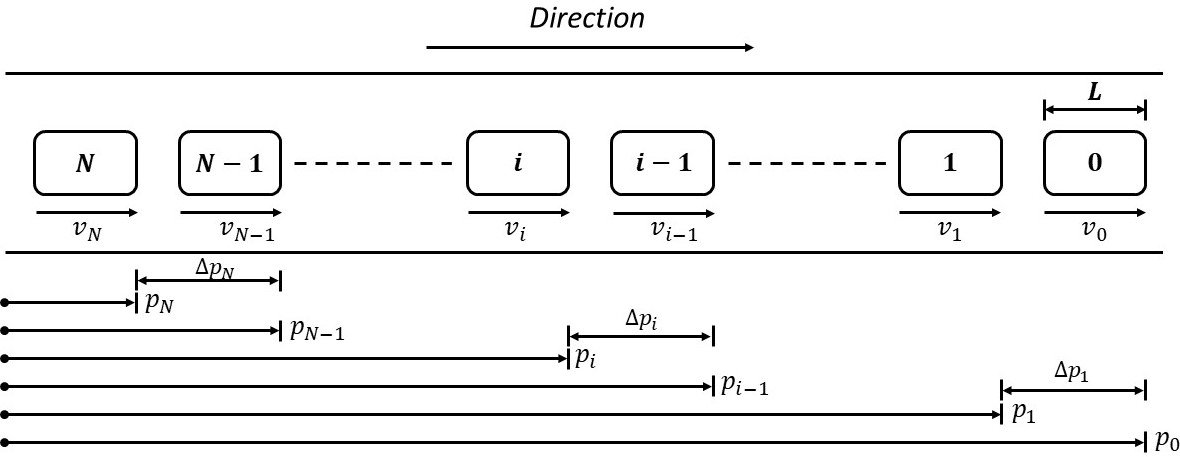}
    \caption{Reference framework.}
    \label{fig:reference_framework}
    \vspace{-10pt}
\end{figure}
Similarly to \cite{Besselink2017TAC}, each vehicle $i\in\mathcal{I}_N^0$ is assumed to satisfy the following longitudinal dynamics:
\begin{equation}\label{eq:generic_longitudinal_dynamics}
    \begin{array}{l}
        \dot{p}_i = f_p(\xi_i) \\
        \dot{\xi}_i = f_\xi(\xi_i)+g_\xi(\xi_i)u_i
    \end{array}
\end{equation}
where $p_i\in\mathbb{R}^+$ is the position of vehicle $i$, $v_i=\dot{p}_i$ ($0< v_i\leq v_{\max}, \ v_{\max}\in\mathbb{R}^+$) is its velocity and the acceleration $u_i$ is the vehicle control input. Variable $\xi_i\in\mathbb{R}^{n-1}$ represents the remaining dynamics of the vehicle, such as actuators dynamics.  

According to \cite{Swaroop1996StringStability} and \cite{Xiao2011StringStability}, the dynamics in (\ref{eq:generic_longitudinal_dynamics}) can be simplified:
\begin{color}{black}
\begin{equation}\label{eq:longitudinal_dynamics}
     \dot{p}_i = v_i, \quad \dot{v}_i = u_i
\end{equation}
\end{color}
\noindent
\hspace{-10pt} where $\xi_i=v_i$, the functions $f_p(\xi_i)=v_i$, $f_\xi(\xi_i)=0$ and $g_\xi(\xi_i)=1$; $u_i$ is the acceleration of vehicle $i$ ($|u_i|\leq a_{\max}, a_{\max}\in\mathbb{R}^+$). \textcolor{black}{The introduced double integrator model is widely used in literature for string stability analysis purposes. Moreover, field experiments adopting control laws developed with respect to this model have shown satisfactory behaviors \cite{Swaroop2019Review}.} 
\noindent
To provide a  global description of the platoon, we adopt the leader-follower model that describes the inter-vehicular interaction  (see \cite{Ploeg2014TAC}, \cite{Mirabilio2020ECC}). We define the state of each vehicle $i\in\mathcal{I}_N^0$ as
\begin{equation}\label{eq:state_vector_x}
    x_i = \left[ p_i \ \ v_i \right]^T
\end{equation}
and the state of each car-following situation among the leading vehicle $i-1$ and the following one $i$ as
\begin{equation}\label{eq:state_vector_chi}
    \chi_i = x_i-x_{i-1} 
    = \left[
        \begin{array}{c}
            \Delta p_i    \\
            \Delta v_i
        \end{array}
    \right] =  \left[
        \begin{array}{c}
            p_i-p_{i-1} \\
            v_i-v_{i-1}
        \end{array}
    \right].
\end{equation}
Positions, speed and acceleration of each leading vehicle are supposed to be perfectly known, either measured or communicated to the following one. Consequently, the obtained dynamical model is:
\begin{equation}\label{eq:car_following_dynamics_i}
   \dot{\chi}_i = \left[
    \begin{array}{c}
      \Delta \dot{p}_i    \\
      \Delta \dot{v}_i
    \end{array}
    \right] =  \left[
    \begin{array}{c}
         \Delta {v}_i \\
         u_i-u_{i-1}
    \end{array}
   \right], \quad i\in\mathcal{I}_N,
\end{equation}
or, shortly,
\begin{align}\label{eq:car_following_dynamics_i_short}
    \dot{\chi}_i  = f(\chi_i,u_i,u_{i-1}), \quad i\in\mathcal{I}_N.
\end{align}
%



To derive the dynamics of the first vehicle $i=0$ of the platoon, in the same form of (\ref{eq:car_following_dynamics_i}), a non-autonomous non-communicating  virtual leader $i=-1$ is considered to precede the set of vehicles, with dynamical model
\begin{equation}\label{eq:virtual_leader}
    \dot{x}_{-1} = \left[
    \begin{array}{c}
         \dot{p}_{-1}  \\
         \dot{v}_{-1} 
    \end{array}
    \right] = \left[
    \begin{array}{c}
         v_{-1}  \\
         u_{-1}
    \end{array}
    \right].
\end{equation}
Then, the car-following dynamics with respect to vehicle $i=0$ can be described by:
\begin{equation}\label{eq:car_following_dynamics_0}
   \dot{\chi}_0 = \left[
    \begin{array}{c}
      \Delta \dot{p}_0    \\
      \Delta \dot{v}_0
    \end{array}
    \right] =  \left[
    \begin{array}{c}
         \Delta {v}_0 \\
         u_0-u_{-1}
    \end{array}
   \right].
\end{equation}
It follows that the dynamics in (\ref{eq:car_following_dynamics_i_short}) is valid $\forall \  i\in\mathcal{I}_N^0$. Since we consider $i=-1$ to represent a virtual vehicle, $p_{-1}(t) = \int_0^t v_{-1}(\tau)d\tau, \ t\geq 0,$ is a dummy state. Moreover, we consider $\Delta p_0(t) = -\Delta\Bar{p}, \ \forall t\geq 0$, where $\Delta\Bar{p} > 0$ is the constant desired inter-vehicular distance. \textcolor{black}{Since $p_i<p_{i-1}$, the desired distance has to be negative}.

In accordance to \cite{Ploeg2014TAC}, \cite{ZhengBorrelli2017TCST}, \cite{Zheng2014ITSC}, we consider the  widely accepted hypothesis of constant speed for the virtual leader $i=-1$, that precedes the entire cluster. Then, we have
\begin{equation}
    p_{-1}(t) = \Bar{v}\cdot t, \ v_{-1}(t) = \Bar{v}, \ u_{-1}(t) = 0, \ \forall t\geq 0
\end{equation}
where $\Bar{v}>0$ is a constant speed. 
%
%
For vehicular platoons, the constant speed assumption defines the equilibrium point for all the vehicles in the cluster. Consequently, when all $i\in\mathcal{I}_N^0$ have equal speed and are at the same desired distance $\Delta\Bar{p}$, the equilibrium point for the $i$-th system of dynamics (\ref{eq:car_following_dynamics_i_short}) is
\begin{equation}\label{eq:chi_equilibrium}
   \chi_{e,i} = \Bar{\chi} = [\ -\Delta\Bar{p} \ \ \ 0 \ ]^T.
\end{equation}
Let $ \chi$ be the lumped state of the entire vehicle platoon:
\begin{equation}\label{eq:platoon_state}
    \chi=[\chi_0^T \ \chi_1^T \ ... \ \chi_N^T]^T.
\end{equation}
Then, for $u_{-1}=0$ it follows that 
\begin{equation}\label{eq:platoon_equilibrium}
     \chi_e = [ \Bar{\chi}^T \ \Bar{\chi}^T \ ... \ \Bar{\chi}^T]^T.
\end{equation}

\subsection{String Stability definitions}\label{sct:string_stability}

\begin{color}{black}

Let the model in (\ref{eq:car_following_dynamics_i}) describe a platoon, and its equilibrium be (\ref{eq:chi_equilibrium}). The control input $u_i$ is generated by the following dynamic controller
\begin{equation}\label{eq:dynamic_control_law}
    \begin{cases}
        \dot{\rho}_i = \omega_i( \rho_i , \chi ) \\
        u_i = h_i( \rho_i , \chi , \chi_{e,i} , u_{i-1} )
    \end{cases}
\end{equation}
where $\rho_i \in \mathbb{R}^r$, $r \geq 1$, is the state vector with dimension $r$ of the dynamic controller;  $\omega_i : \mathbb{R}^r \times \mathbb{R}^{2N}  \rightarrow \mathbb{R}^r$ is the vector field describing the evolution of the controller state; $h_i : \mathbb{R}^r \times \mathbb{R}^{2N} \times \mathbb{R}^2 \times \mathbb{R} \rightarrow \mathbb{R}$ is the output function of the system that corresponds to the control input $u_i$. The inputs of the dynamical system (\ref{eq:dynamic_control_law}) are $\chi, \ \chi_{e,i}, \ u_{i-1}$. The resulting closed loop system is denoted in the sequel by $P_{cl}$, where the state vector of the $i-$th vehicle and the corresponding equilibrium point are
\begin{equation}\label{eq:chi_extendedGENERAL}
    \Hat{\chi}_i = [ \: \chi_i^T \:\: \rho_i^T \: ]^T, \:\: \Hat{\chi}_{e,i} = [ \: \Bar{\chi}^T \:\: 0_{r}^T \: ]^T, \:\: \forall \: i \in\mathcal{I}_N^0,
\end{equation}
with $0_r\in\mathbb{R}^r$ a null column vector. 

We now recall the notions of String Stability and Asymptotic String Stability from \cite{Swaroop1996StringStability} and \cite{Besselink2017TAC}.
\begin{definition}{(String Stability)}\label{def:TSS_string_stability}
    The equilibrium $\Hat{\chi}_{e,i}, \: i\in\mathcal{I}_N^0$, of $P_{cl}$ is said to be String Stable if, for any $\epsilon>0$, there exists $\delta>0$ such that, for all $N\in\mathbb{N}$,
    \begin{equation}
        \max_{i\in\mathcal{I}_N^0}|\Hat{\chi}_i(0)-\Hat{\chi}_{e,i}|<\delta \Rightarrow \max_{i\in\mathcal{I}_N^0}|\Hat{\chi}_i(t)-\Hat{\chi}_{e,i}|<\epsilon, \ \forall \ t\geq 0.
    \end{equation}
\end{definition}
\begin{definition}{(Asymptotic String Stability)}\label{def:ATSS_string_stability}
    The equilibrium $\chi_{e,i}=\Bar{\chi}, \ i\in\mathcal{I}_N^0$, of $P_{cl}$ is said to be Asymptotically String Stable if it is String Stable and, for all $N\in\mathbb{N}$,
    \begin{equation}
        \lim_{t\rightarrow\infty}|\Hat{\chi}_i(t)-\Hat{\chi}_{e,i}|=0, \quad \forall \ i\in\mathcal{I}_N^0.
    \end{equation}
\end{definition}
\end{color}

\section{Control tools}\label{sct:control_tools}
The goal of this paper is to design a controller (\ref{eq:dynamic_control_law}) that adopts mesoscopic quantities and ensures asymptotic string stability of $P_{cl}$. To this purpose, a proper spacing policy  and a function describing macroscopic information are introduced. 

\subsection{Spacing policy}\label{sct:policies}

Several spacing policies have been introduced in the literature (see \cite{Swaroop1999DynSystCont}, \cite{Hedrick2004strings}). 
%
%
%
%
We adopt a variable time spacing policy, which consists in tracking a variable inter-vehicular desired distance and allows for string stability and a low inter-vehicular spacing at steady-state (see \cite{Yanakiev1995CDC} and \cite{Zhang2005CDC}).
%
%
We define a \textit{mesoscopic} time varying trajectory for the distance policy $\Delta p^r_i$ of the $i$-th vehicle with respect to its leader $i-1$:
\begin{equation}\label{eq:mesoscopic_spacing_policy}
    \Delta p_{i}^r(t) = -\Delta\Bar{p} - \rho^M_i(t), \ t\geq 0
\end{equation}
where $\Delta\Bar{p}>0$ is the desired constant inter-vehicular distance and $\rho^M_i(t)$ is a function describing macroscopic information. Our goal is to show that, by using the macroscopic information, transient harmonization when traffic conditions vary is obtained while maintaining the platoon equilibrium in (\ref{eq:platoon_equilibrium}) in steady-state.
%
    
%



\subsection{Macroscopic information}\label{sct:macroscopic_info}
Here we define proper macroscopic functions taking into account microscopic distance and speed variance, similarly to \cite{Iovine2015NAHS}. 
%
Given the generic vehicle $i\in\mathcal{I}_N$, let $\mu_{\Delta p,i}$ and $\sigma^2_{\Delta p,i}$ be the inter-vehicular distance mean and variance computed from vehicle $0$ to vehicle $i$, respectively:
\begin{equation}\label{eq:meanvar_distance}
    \mu_{\Delta p,i} = \frac{1}{i+1}\sum_{j = 0}^{i}\Delta p_j , \quad
    \sigma_{\Delta p,i}^2 = \frac{1}{i+1}\sum_{j = 0}^{i}(\Delta p_j-\mu_{\Delta p_i})^2.
\end{equation}
Let $\mu_{\Delta v,i}$ and $\sigma^2_{\Delta v,i}$ be the velocity tracking error mean and variance computed from vehicle $0$ to vehicle $i$, respectively:
\begin{equation}\label{eq:meanvar_speederror}
    \mu_{\Delta v,i} = \frac{1}{i+1}\sum_{j = 0}^{i}\Delta v_j, \quad
    \sigma_{\Delta v,i}^2 = \frac{1}{i+1}\sum_{j = 0}^{i}(\Delta v_j-\mu_{\Delta v_i})^2 .
\end{equation}
Let $\psi_{\Delta p}^i : \mathbb{R}\times\mathbb{R}^+ \rightarrow \mathbb{R}$ be the distance macroscopic function
and $\psi_{\Delta v}^i : \mathbb{R}\times\mathbb{R}^+ \rightarrow \mathbb{R}$ the speed tracking error macroscopic function, defined as
%
%
\begin{equation}\label{eq:psi_delta_p}
    \psi_{\Delta p}^i = \gamma_{\Delta p} sign(\Delta\Bar{p}+\mu_{\Delta p,i})\sqrt{\sigma^2_{\Delta p,i}},
\end{equation}
\begin{equation}\label{eq:psi_delta_v}
    \psi_{\Delta v}^i = \gamma_{\Delta v} sign(\mu_{\Delta v,i})\sqrt{\sigma^2_{\Delta v,i}},
\end{equation}
\begin{color}{black}
where $\gamma_{\Delta p},\gamma_{\Delta v} > 0$ are constant parameters, $\mu_{\Delta p,i}$, $\mu_{\Delta v,i}$, $\sigma^2_{\Delta p,i}$ and $\sigma^2_{\Delta v,i}$ are defined in (\ref{eq:meanvar_distance}) and (\ref{eq:meanvar_speederror}), and
\begin{equation}
    sign(y) = 
    \begin{cases}
        1, \:\:\: & y > 0 \\
        0, & y = 0 \\
        -1, & y < 0
    \end{cases}
\end{equation}
%
%
\end{color} 
\noindent Functions (\ref{eq:psi_delta_p}) and (\ref{eq:psi_delta_v}) connect the macroscopic density function with the variance of the microscopic distance and speed difference. \textcolor{black}{These functions catch the distance of the system with respect to its equilibrium.}
Instead of considering the whole set of leader-follower situations, they allow for a complexity reduction of the considered interconnected system without reducing the level of available information.

\begin{color}{black}
We embed the macroscopic information given by (\ref{eq:psi_delta_p}) and (\ref{eq:psi_delta_v}) in the macroscopic function denoted by 
%
    $\rho_i = [ \rho_{1,i} \ \ \rho_{2,i} ]^T$, 
%
the evolution of which is given by the controller dynamics (\ref{eq:dynamic_control_law}), where we choose a state dimension $r = 2$ and an asymptotically stable dynamics:
%
\end{color}
\begin{equation}\label{eq:rho_dynamic_system}
    \begin{cases}
        \dot{\rho}_{1,i} = -\lambda_1 \rho_{1,i} + \rho_{2,i} \\
        \dot{\rho}_{2,i} = -\lambda_2 \rho_{2,i} + a \psi_{\Delta p}^{i-1} + b \psi_{\Delta v}^{i-1} \\
        \rho_{1,i}(0) = \rho_{2,i}(0) = 0 \\
    \end{cases}
\end{equation}
%
where $a,b\geq 0$ are chosen parameters, and $\lambda_1,\lambda_2 > 0$.
\textcolor{black}{The superscript $i-1$ in $\psi^{i-1}_{\Delta p}$ and $\psi^{i-1}_{\Delta v}$ means that we consider the macroscopic information calculated up to the preceding vehicle. Since no macroscopic information is available to vehicle $0$ we define $\psi^{-1}_{\Delta p} = \psi^{-1}_{\Delta v} = 0$.} Different macroscopic functions can be proposed, as in \cite{Iovine2015NAHS}. \textcolor{black}{The macroscopic function $\rho_i^M$ in (\ref{eq:mesoscopic_spacing_policy}) is defined as a linear combination of the components of $\rho_i$.  
Its role is to incorporate the whole macroscopic information of the platoon avoiding complexity calculation explosion by the control law due to the state explosion. }
%
%
\textcolor{black}{If the functions (\ref{eq:meanvar_distance}) and (\ref{eq:meanvar_speederror}) were to be obtained through V2V communications, a demanding exchange of information could be necessary. However, those functions can be easily calculated by the road infrastructure, and then communicated to the vehicles via V2I communications. Consequently, the function $\rho_i$ in (\ref{eq:rho_dynamic_system}) can be obtained without V2V communication. } 
\section{Mesoscopic control law}\label{sct:mesoscopic_control}

In this section, the control law adopting mesoscopic quantities for a single car-following situation is introduced. Then, String Stability and Asymptotic String Stability as in Definitions \ref{def:TSS_string_stability} and \ref{def:ATSS_string_stability} are ensured when the control laws are implemented for each leader-follower situation along the platoon. 
The control law implements the variable spacing policy in (\ref{eq:mesoscopic_spacing_policy}) while considering the function $\rho_i$ in (\ref{eq:rho_dynamic_system}). Each vehicle is modeled according to dynamics (\ref{eq:longitudinal_dynamics}) and each car-following situation according to $\chi_i$ in (\ref{eq:car_following_dynamics_i}) and (\ref{eq:car_following_dynamics_0}). 
\begin{color}{black}
    To  analyze  the  String  Stability  of  the  closed  loop system, we consider the extended state (\ref{eq:chi_extendedGENERAL}) that includes the dynamics in (\ref{eq:rho_dynamic_system}).
    %
    %
    Defining $g_{cl,0}(\hat{\chi}_{-1}) = 0$, we can describe the closed loop dynamics of (\ref{eq:chi_extendedGENERAL}) as 
    \begin{align}
        \dot{\hat{\chi}}_0 &= f_{cl}(\hat{\chi}_0), \ i=0, \label{eq:closed_loop_platoon_short_0}\\
        \dot{\hat{\chi}}_i &= f_{cl}(\hat{\chi}_i) + g_{cl,i}(\hat{\chi}_{i-1},\hat{\chi}_{i-2},...,\hat{\chi}_0), \ \forall \  i\in\mathcal{I}_N. \label{eq:closed_loop_platoon_short_i}
    \end{align}  
    where $f_{cl}:\mathbb{R}^4 \rightarrow \mathbb{R}^4$ is the vector field describing the evolution dynamics of each isolated subsystem, and $g_{cl,i}:\underbrace{\mathbb{R}^4\times\cdot\cdot\cdot\times\mathbb{R}^4}_{i \text{ times}} \rightarrow \mathbb{R}^4$ is the interconnection term.
\end{color}
We assume that the virtual leader $i=-1$ has a constant speed $v_{-1} = \Bar{v}>0, \ u_{-1} = 0$. 
The assumption of $\Delta p_0(t) = -\Delta\Bar{p}, \ \forall t\geq 0$ is considered.

\subsection{Control strategy for variable spacing policy}

We present a control law \textcolor{black}{obtained by backstepping, see \cite{B_khalil_2002}),} for implementing the variable spacing policy in (\ref{eq:mesoscopic_spacing_policy}) with $\rho^M_i = \rho_{1,i}$. The controller associated to the $i-$th vehicle, $\forall \ i\in\mathcal{I}_N^0$, is given by: 
%
\begin{align}\label{eq:control_input_2}
    \nonumber u_i &= u_{i-1} - (\Delta p_i + \Delta\Bar{p} + \rho_{1,i}) - \lambda_1(\lambda_1\rho_{1,i} - \rho_{2,i}) \\
    \nonumber &\quad + \lambda_2\rho_{2,i} - K_{\Delta p}(\Delta v_i - \lambda_1\rho_{1,i} +\rho_{2,i}) \\
    \nonumber &\quad - a\psi^{i-1}_{\Delta p} - b\psi^{i-1}_{\Delta v} - K_{\Delta v}(\Delta v_i - \lambda_1\rho_{1,i} + \rho_{2,i} \\
    \nonumber &\quad +K_{\Delta p}(\Delta p_i + \Delta\Bar{p} + \rho_{1,i})) \\
    \nonumber &= u_{i-1} - (\Delta p_i - \Delta p_i^r) - K_{\Delta v}(\Delta v_i - \Delta v_i^r) \\
    \nonumber &\quad + (K_{\Delta p} - \lambda_1)(\lambda_1\rho_{1,i} - \rho_{2,i}) +\lambda_2\rho_{2,i} \\
    &\quad -K_{\Delta p}\Delta v_i - a\psi^{i-1}_{\Delta p} - b\psi^{i-1}_{\Delta v},
\end{align}
with
\begin{align}\label{eq:delta_v_ref_i}
    \Delta v_i^r = \lambda_1\rho_{1,i}-\rho_{2,i}-K_{\Delta p}(\Delta p_i-\Delta p_i^r)
\end{align}
and given constant gains $K_{\Delta p},K_{\Delta v}>0$ equal for each $i\in\mathcal{I}_N^0$, $\Delta p_i^r = -\Delta\Bar{p}-\rho_{1,i}$, 
%
%
and $\rho_{1,i},\rho_{2,i}$ as defined in (\ref{eq:rho_dynamic_system}). 

Note that the controller in (\ref{eq:control_input_2}) considers both microscopic and macroscopic information, leading to a \textit{mesoscopic} framework. 
%
To analyze the String Stability of the closed loop system, we consider the extended leader-follower state vector $\hat{\chi}_i$ and its corresponding equilibrium point $\Hat{\chi}_{e,i}$ in (\ref{eq:chi_extendedGENERAL}). 
%
%
%
%
%
The closed loop dynamics for each $i\in\mathcal{I}_N^0$ results to be:
%

%
\begin{align}\label{eq:car_following_dynamics_i_closed_loop_2}
    \dot{\hat{\chi}}_i =  \left[  
        \begin{array}{c}
            \Delta\dot{p}_i  \\
            \Delta\dot{v}_i  \\
            \dot{\rho}_{1,i} \\
            \dot{\rho}_{2,i} 
        \end{array}
    \right] = \left[  
    \begin{array}{c}
        \Delta v_i  \\
        (*) \\
        -\lambda_1\rho_{1,i} + \rho_{2,i} \\
        -\lambda_2\rho_{2,i} + a\psi_{\Delta p}^{i-1} + b\psi_{\Delta v}^{i-1}
    \end{array}
    \right]
\end{align}
with 
\begin{align*}
    (*) &= - (\Delta p_i - \Delta p_i^r) - K_{\Delta v}(\Delta v_i - \Delta v_i^r) \\
    \nonumber &\quad + (K_{\Delta p} - \lambda_1)(\lambda_1\rho_{1,i} - \rho_{2,i}) +\lambda_2\rho_{2,i-1} \\
    &\quad -K_{\Delta p}\Delta v_i - a\psi^{i-1}_{\Delta p} - b\psi^{i-1}_{\Delta v}.
\end{align*}
\textcolor{black}{We remark that $\psi^{-1}_{\Delta p} = \psi^{-1}_{\Delta v} = 0$.} Then, we can rewrite the system in 
(\ref{eq:car_following_dynamics_i_closed_loop_2}) as (\ref{eq:closed_loop_platoon_short_0}) and (\ref{eq:closed_loop_platoon_short_i}), 
where $g_{cl,i}(\hat{\chi}_{i-1},...,\hat{\chi}_0)$ is 
\begin{equation}\label{eq:car_following_dynamics_g_cl_1}
    g_{cl,i}(\hat{\chi}_{i-1},\hat{\chi}_{i-2},...,\hat{\chi}_0) = \left[ 
        \begin{array}{c}
            0  \\
            -(a\psi_{\Delta p}^{i-1} + b\psi_{\Delta v}^{i-1}) \\
            0  \\
            a\psi_{\Delta p}^{i-1} + b\psi_{\Delta v}^{i-1}
        \end{array}
    \right].
\end{equation}
%
\subsection{String Stability analysis}
 
%
\noindent
Set $\Tilde{\chi}_i = \hat{\chi}_{i} - \hat{\chi}_{e,i}$. Then, the following result holds:
\begin{lemma}\label{lmm:closed_loop_ISS_control2}
    Consider the closed loop system described by 
    (\ref{eq:car_following_dynamics_i_closed_loop_2}). Then, there exist functions $\beta$ of class $\mathcal{KL}$ and $\gamma$ of class $\mathcal{K}_\infty$ such that, if $ K_{\Delta p}, K_{\Delta v}, \lambda_1, \lambda_2 > 0$
    %
    %
    then,
    \begin{equation}\label{eq:closed_loop_platoon_ISS_ineq2}
        |\Tilde{\chi}_i(t)| \leq \beta( |\Tilde{\chi}_i(0)| , t ) + \gamma\left( \max_{j=0,...,i-1} |\Tilde{\chi}_j(\cdot)|_\infty^{[0,t]} \right)
    \end{equation}
    $\forall t\geq 0$, and $\gamma(s) = \Tilde{\gamma}s,\ s\geq 0$, $\Tilde{\gamma}\in\mathbb{R}^+$. Moreover, there exist 
     $a$ and $b$ in (\ref{eq:rho_dynamic_system}) such that $\Tilde{\gamma} \in (0,1)$. 
\end{lemma}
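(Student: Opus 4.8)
The plan is to follow the Lyapunov/ISS route used for the constant-spacing controller and, in spirit, \cite{Besselink2017TAC}: for each closed-loop subsystem (\ref{eq:car_following_dynamics_i_closed_loop_2}) I construct a Lyapunov function $W_i=W(\Tilde{\chi}_i)$, bound $\dot{W}_i$, dominate the macroscopic interconnection signals $\psi^{i-1}_{\Delta p},\psi^{i-1}_{\Delta v}$ by the states of the $i$ preceding vehicles through the elementary variance bound $\sigma_y^2\le\tfrac14(\max_l y_l-\min_l y_l)^2$, and then invoke Theorem~4.19 in \cite{B_khalil_2002}. The first step is to pass to the backstepping error coordinates $e_{1,i}:=\Delta p_i-\Delta p_i^r=\Delta p_i+\Delta\Bar{p}+\rho_{1,i}$ and $e_{2,i}:=\Delta v_i-\Delta v_i^r$, with $\Delta v_i^r$ as in (\ref{eq:delta_v_ref_i}). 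Substituting (\ref{eq:control_input_2}) and the controller dynamics (\ref{eq:rho_dynamic_system}) into (\ref{eq:car_following_dynamics_i_closed_loop_2}), a direct computation shows that in the coordinates $(e_{1,i},e_{2,i},\rho_{1,i},\rho_{2,i})$ the closed loop splits into a Hurwitz linear block
\begin{align*}
\dot{e}_{1,i}&=-K_{\Delta p}\,e_{1,i}+e_{2,i}, \\
\dot{e}_{2,i}&=-e_{1,i}-K_{\Delta v}\,e_{2,i},
\end{align*}
which is \emph{unforced} --- the macroscopic terms generated by differentiating $\Delta v_i^r$ are cancelled exactly by the $-a\psi^{i-1}_{\Delta p}-b\psi^{i-1}_{\Delta v}$ component of (\ref{eq:control_input_2}) --- together with the controller block $\dot{\rho}_{1,i}=-\lambda_1\rho_{1,i}+\rho_{2,i}$, $\dot{\rho}_{2,i}=-\lambda_2\rho_{2,i}+a\psi^{i-1}_{\Delta p}+b\psi^{i-1}_{\Delta v}$, which is the only part that sees the upstream vehicles.

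Next I would take $W_i=\tfrac12\big(e_{1,i}^2+e_{2,i}^2\big)+\rho_i^T P_\rho\,\rho_i$, with $\rho_i=[\rho_{1,i}\ \rho_{2,i}]^T$ and $P_\rho$ a positive definite Lyapunov matrix for the $\rho$-block; such a $P_\rho$ exists for every $\lambda_1,\lambda_2>0$ since that block is Hurwitz, and a diagonal gain-dependent weighting already suffices. Because the off-diagonal part of the $(e_{1,i},e_{2,i})$ matrix is skew-symmetric, the $e$-part of $\dot{W}_i$ reduces to the clean term $-K_{\Delta p}e_{1,i}^2-K_{\Delta v}e_{2,i}^2$; and since $(e_{1,i},e_{2,i},\rho_{1,i},\rho_{2,i})$ is an invertible \emph{linear} function of $\Tilde{\chi}_i=\Hat{\chi}_i-\Hat{\chi}_{e,i}$ (see (\ref{eq:chi_extendedGENERAL})), $W_i$ satisfies $\underline{\alpha}|\Tilde{\chi}_i|^2\le W_i\le\Bar{\alpha}|\Tilde{\chi}_i|^2$ with $\underline{\alpha},\Bar{\alpha}>0$ depending only on $K_{\Delta p},K_{\Delta v},\lambda_1,\lambda_2$, hence neither on $i$, nor on $N$, nor on $a,b$. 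Differentiating along the closed loop, the $\rho$-part contributes $-\rho_i^T Q_\rho\rho_i+2\rho_i^T P_\rho[0\ 1]^T(a\psi^{i-1}_{\Delta p}+b\psi^{i-1}_{\Delta v})$ with $Q_\rho$ positive definite, so $\dot{W}_i\le-\Tilde{\alpha}|\Tilde{\chi}_i|^2+c\,|\rho_i|\big(a|\psi^{i-1}_{\Delta p}|+b|\psi^{i-1}_{\Delta v}|\big)$ for constants $\Tilde{\alpha},c>0$ fixed by the gains.

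It then remains to bound the macroscopic signals. Exactly as for the constant-spacing controller, translation invariance of the variances in (\ref{eq:psi_delta_p})--(\ref{eq:psi_delta_v}), together with $\sigma_y^2\le\tfrac14(\max_l y_l-\min_l y_l)^2$ and $\max_l y_l-\min_l y_l\le 2\max_l|y_l|$, gives $a|\psi^{i-1}_{\Delta p}|+b|\psi^{i-1}_{\Delta v}|\le d\,\max_{j=0,\dots,i-1}|\Tilde{\chi}_j|$ with $d:=a\gamma_{\Delta p}+b\gamma_{\Delta v}$; since $|\rho_i|\le|\Tilde{\chi}_i|$ this yields $\dot{W}_i\le-\Tilde{\alpha}|\Tilde{\chi}_i|^2+c\,d\,|\Tilde{\chi}_i|\max_{j=0,\dots,i-1}|\Tilde{\chi}_j|$. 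Fixing any $\Upsilon\in(0,1)$, for $|\Tilde{\chi}_i|\ge\frac{c\,d}{\Tilde{\alpha}\Upsilon}\max_{j=0,\dots,i-1}|\Tilde{\chi}_j|$ one has $\dot{W}_i\le-(1-\Upsilon)\Tilde{\alpha}|\Tilde{\chi}_i|^2$, which is the ISS hypothesis of Theorem~4.19 in \cite{B_khalil_2002}; that theorem delivers (\ref{eq:closed_loop_platoon_ISS_ineq2}) with $\beta\in\mathcal{KL}$ and $\gamma(s)=\Tilde{\gamma}s$, where $\Tilde{\gamma}=\sqrt{\Bar{\alpha}/\underline{\alpha}}\;\frac{c\,d}{\Tilde{\alpha}\Upsilon}$. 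For the last assertion, observe that $\Bar{\alpha},\underline{\alpha},\Tilde{\alpha},c,\Upsilon$ do not depend on $a,b$, while $d$ is continuous and strictly increasing in $a,b\ge0$ and vanishes at the origin; choosing $a,b\ge0$ small enough with $a+b>0$ therefore forces $\Tilde{\gamma}\in(0,1)$.

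The step I expect to be the main obstacle is the coordinate change and the attendant derivative computation --- in particular checking that the $\psi^{i-1}_{\Delta p},\psi^{i-1}_{\Delta v}$ terms cancel in the $\dot{e}_{2,i}$ equation, which is precisely what renders the $(e_{1,i},e_{2,i})$ block autonomous and lets the ISS gain condition hold for \emph{all} positive control gains (unlike the constant-spacing case, which required an extra inequality between the gains). Identifying the norm-equivalence constants $\underline{\alpha},\Bar{\alpha}$ from the linear change of variables is the only other slightly delicate point; the rest is routine and parallels the constant-spacing argument.
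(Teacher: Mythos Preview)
Your proposal is correct and follows essentially the same route as the paper: the paper takes exactly your error coordinates $e_{1,i}=\Delta p_i-\Delta p_i^r$, $e_{2,i}=\Delta v_i-\Delta v_i^r$ and the Lyapunov function $W_i=\tfrac12\big(e_{1,i}^2+e_{2,i}^2+\rho_{1,i}^2+\rho_{2,i}^2\big)$ (i.e.\ your construction with $P_\rho=\tfrac12 I$), computes $\dot{W}_i=-K_{\Delta p}e_{1,i}^2-K_{\Delta v}e_{2,i}^2-\lambda_1\rho_{1,i}^2-\lambda_2\rho_{2,i}^2+\rho_{1,i}\rho_{2,i}+\rho_{2,i}\big(a\psi^{i-1}_{\Delta p}+b\psi^{i-1}_{\Delta v}\big)$ --- which confirms your key observation that the $\psi$-terms cancel in the $(e_{1,i},e_{2,i})$ block --- and then applies the same variance bound and Khalil's Theorem~4.19 to obtain $\Tilde{\gamma}=\sqrt{\Bar{\alpha}/\underline{\alpha}}\,d/(\alpha\Upsilon)$ with $d=a\gamma_{\Delta p}+b\gamma_{\Delta v}$. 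Your use of a general Lyapunov matrix $P_\rho$ for the $\rho$-block is a harmless (slightly more robust) variant of the paper's specific choice.
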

\begin{proof}\color{black}
    See Appendix \ref{apx:closed_loop_ISS_control2_proof} in \cite{Mirabilio2020CDC_Extended}.
\end{proof}
\noindent
On the basis of Lemma \ref{lmm:closed_loop_ISS_control2}, Asymptotic String Stability of the platoon can be obtained by an appropriate choice of the parameters in (\ref{eq:car_following_dynamics_i_closed_loop_2}), 
as shown in the following:

\begin{theorem}\label{thm:string_stability_control2}
    The closed loop system described by 
    (\ref{eq:car_following_dynamics_i_closed_loop_2}) where the parameters $K_{\Delta p}, K_{\Delta v}, \lambda_1, \lambda_2 > 0$ and parameters $a,b$ are such that $\tilde{\gamma}\in (0,1)$, is Asymptotically String Stable.
\end{theorem}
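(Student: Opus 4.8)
The plan is to exploit the per-link ISS estimate of Lemma~\ref{lmm:closed_loop_ISS_control2} in two complementary ways: a \emph{forward} induction along the platoon index, to obtain String Stability (Definition~\ref{def:TSS_string_stability}) with a bound independent of $N$; and a composite Lyapunov argument on the stacked closed-loop state, to obtain asymptotic convergence to the equilibrium. Together these two facts are exactly Asymptotic String Stability in the sense of Definition~\ref{def:ATSS_string_stability}.

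\emph{Step 1 (String Stability).} Vehicle $i=0$ has no interconnection term in (\ref{eq:car_following_dynamics_i_closed_loop_2}) (recall $g_{cl,0}=0$ in (\ref{eq:closed_loop_platoon_short_0})), so Lemma~\ref{lmm:closed_loop_ISS_control2} gives $|\Tilde{\chi}_0(t)|\le\beta(|\Tilde{\chi}_0(0)|,t)\le\beta(|\Tilde{\chi}_0(0)|,0)$ for all $t\ge 0$. Iterating (\ref{eq:closed_loop_platoon_ISS_ineq2}), at step $i$ the input term $\gamma\big(\max_{j<i}|\Tilde{\chi}_j(\cdot)|_\infty^{[0,t]}\big)=\Tilde{\gamma}\max_{j<i}|\Tilde{\chi}_j(\cdot)|_\infty^{[0,t]}$ is dominated by the estimates already obtained for $j<i$; setting $M:=\max_{j\in\mathcal{I}_N^0}|\Tilde{\chi}_j(0)|$ and using that $\beta$ is non-increasing in its second argument, a straightforward induction gives $|\Tilde{\chi}_i(t)|\le\beta(M,0)\sum_{k=0}^{i}\Tilde{\gamma}^{k}$ for all $t\ge 0$. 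Since $\Tilde{\gamma}\in(0,1)$ by hypothesis, $\sum_{k=0}^{i}\Tilde{\gamma}^{k}\le\frac{1}{1-\Tilde{\gamma}}$, a bound independent of $i$ and of $N$, so $\max_{i\in\mathcal{I}_N^0}|\Tilde{\chi}_i(t)|\le\frac{1}{1-\Tilde{\gamma}}\,\omega(M)$ with $\omega(s):=\beta(s,0)\in\mathcal{K}_\infty$. Given $\epsilon>0$, the choice $\delta:=\omega^{-1}\big((1-\Tilde{\gamma})\epsilon\big)$ is well defined (since $\omega$ is invertible) and independent of $N$, and delivers the implication in Definition~\ref{def:TSS_string_stability}. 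This is the inductive ISS scheme of \cite{Besselink2017TAC}, the difference being that the interconnection input of vehicle $i$ aggregates the macroscopic information of all its predecessors rather than of vehicle $i-1$ alone.

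\emph{Step 2 (asymptotic convergence).} Let $W(\Tilde{\chi}_i)$ be the per-link Lyapunov function used in the proof of Lemma~\ref{lmm:closed_loop_ISS_control2} (the sum of the squared backstepping errors plus $\tfrac{1}{2}(\rho_{1,i}^2+\rho_{2,i}^2)$), which satisfies $\underline{\alpha}|\Tilde{\chi}_i|^2\le W(\Tilde{\chi}_i)\le\Bar{\alpha}|\Tilde{\chi}_i|^2$ and, from the same computation, $\dot W(\Tilde{\chi}_i)\le-\alpha|\Tilde{\chi}_i|^2+|\Tilde{\chi}_i|\,\big(a|\psi_{\Delta p}^{i-1}|+b|\psi_{\Delta v}^{i-1}|\big)$ with $\alpha>0$ when $K_{\Delta p},K_{\Delta v},\lambda_1,\lambda_2>0$. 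Applying the variance estimate $\sigma_y^2\le\tfrac{1}{4}(\max_l y_l-\min_l y_l)^2$ to the quantities defining $\psi_{\Delta p}^{i-1}$ and $\psi_{\Delta v}^{i-1}$ in (\ref{eq:psi_delta_p})--(\ref{eq:psi_delta_v}) yields $a|\psi_{\Delta p}^{i-1}|+b|\psi_{\Delta v}^{i-1}|\le\sum_{j=0}^{i-1}\Tilde{k}_j|\Tilde{\chi}_j|$ for suitable constants $\Tilde{k}_j>0$ (carrying the $\tfrac{1}{i}$-type normalisation of the means and variances), hence $\dot W(\Tilde{\chi}_i)\le-\alpha|\Tilde{\chi}_i|^2+|\Tilde{\chi}_i|\sum_{j=0}^{i-1}\Tilde{k}_j|\Tilde{\chi}_j|$. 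For the stacked error state $\Tilde{\chi}=[\,\Tilde{\chi}_0^T,\dots,\Tilde{\chi}_N^T\,]^T$ I would take $W_c(\Tilde{\chi})=\sum_{i=0}^{N}d_iW(\Tilde{\chi}_i)$ with weights $d_i>0$ to be fixed; then $\dot W_c(\Tilde{\chi})\le-\tfrac{1}{2}\phi(\Tilde{\chi})^T(DS+S^TD)\phi(\Tilde{\chi})$ with $\phi(\Tilde{\chi})=[\,|\Tilde{\chi}_0|,\dots,|\Tilde{\chi}_N|\,]^T$, $D=\mathrm{diag}(d_0,\dots,d_N)$, and $S$ triangular with diagonal entries $\alpha>0$ and off-diagonal entries $-\Tilde{k}_j$. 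Being triangular with strictly positive diagonal, $S$ has all leading principal minors positive, hence is an $M$-matrix, and by \cite[Lemma~9.7]{B_khalil_2002} there is a diagonal $D>0$ making $DS+S^TD$ positive definite; then $\dot W_c$ is negative definite, $W_c$ is a Lyapunov function for the whole closed-loop platoon, there exists a $\mathcal{KL}$ function $\beta_c$ with $|\Tilde{\chi}(t)|\le\beta_c(|\Tilde{\chi}(0)|,t)$, and therefore $\lim_{t\to\infty}|\Tilde{\chi}_i(t)|=0$ for every $i\in\mathcal{I}_N^0$. Together with Step~1, this is precisely Definition~\ref{def:ATSS_string_stability}.

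\emph{Main obstacle.} Given Lemma~\ref{lmm:closed_loop_ISS_control2}, the theorem is essentially an assembly of the two arguments above, so the work is bookkeeping rather than a new idea. The delicate point is that a single set of per-link estimates must serve both halves: the ISS gain $\Tilde{\gamma}$ must be forced below $1$ by taking $a,b$ small (for Step~1 and the $N$-independence of $\delta$), while the very same Lyapunov derivative must be re-read as the triangular cross-coupling $\dot W(\Tilde{\chi}_i)\le-\alpha|\Tilde{\chi}_i|^2+|\Tilde{\chi}_i|\sum_{j<i}\Tilde{k}_j|\Tilde{\chi}_j|$ with $\alpha>0$ fixed by $K_{\Delta p},K_{\Delta v},\lambda_1,\lambda_2$ (for Step~2); in particular one must check that the $sign(\cdot)$/square-root form of $\psi_{\Delta p}^{i-1},\psi_{\Delta v}^{i-1}$ in (\ref{eq:psi_delta_p})--(\ref{eq:psi_delta_v}) still admits a bound linear in the $|\Tilde{\chi}_j|$ near the equilibrium and that the $\tfrac{1}{i}$ normalisation keeps the $\Tilde{k}_j$ tame. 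The $M$-matrix step itself is automatic, since the macroscopic terms couple vehicle $i$ only to strictly preceding vehicles and $S$ is triangular with positive diagonal by construction.
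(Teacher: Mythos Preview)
Your proposal is correct and follows essentially the same two-part structure as the paper's own proof: a forward ISS induction along the string index yielding the geometric-series bound and $\delta=\omega^{-1}\big((1-\Tilde{\gamma})\epsilon\big)$ for String Stability, followed by the composite Lyapunov function $W_c=\sum_i d_iW(\Tilde{\chi}_i)$ and the $M$-matrix argument via \cite[Lemma~9.7]{B_khalil_2002} for asymptotic convergence. The only cosmetic difference is that the paper obtains the \emph{sum} bound $a|\psi_{\Delta p}^{i-1}|+b|\psi_{\Delta v}^{i-1}|\le\sum_{j<i}\Tilde{k}_j|\Tilde{\chi}_j|$ directly from the variance definition $\sigma^2_{\Delta p,i}\le\tfrac{1}{i+1}\sum_j\Delta\tilde p_j^2$ (giving $\Tilde{k}_i=\tfrac{2}{\sqrt{i}}\max\{a\gamma_{\Delta p},b\gamma_{\Delta v}\}$), whereas the range estimate you cite is what furnishes the $\max$ bound used in Lemma~\ref{lmm:closed_loop_ISS_control2}; this does not affect your plan.
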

\begin{proof}\color{black}
    See Appendix \ref{apx:string_stability_control2_proof} in \cite{Mirabilio2020CDC_Extended}.
\end{proof}
\section{Simulations}\label{sct:simulations}

The introduced control strategy is simulated in Matlab\&Simulink. Based on the modeling in (\ref{eq:car_following_dynamics_i}), we consider a platoon of $N+1=11$ vehicles. 
The initial conditions for each vehicle are randomly generated in a neighborhood of the equilibrium point. It results $\mu_{\Delta p}\neq-\Delta\Bar{p},\mu_{\Delta v}\neq 0$ $\sigma^2_{\Delta p},\sigma^2_{\Delta v} \neq 0$. The reference distance is $\Delta\Bar{p} = 10m$ and the initial desired speed of the leading vehicle is $\Bar{v} = 14m/s$. Vehicle speed is  $0 < v_i\leq 36$ $[m/s]$ and the acceleration is bounded by $-4\leq u_i\leq 4$ $[m/s^2]$. The control parameters are introduced in Table \ref{table:parameters}, with a resulting $\Tilde{\gamma}=0.5$. To better stress the advantages of the proposed controller, we analyze the behavior of the system when a disturbance acts on the acceleration of vehicle $i=0$, and it is not communicated to vehicle $i=1$. 

\noindent The simulation time is 1 minute. We split it into three phases:
%
\begin{center}
\begin{table}
    \caption{The control parameters.}
    \begin{tabular}{ | c | c | c | c |c | c | }
    \hline
   Parameter  & Value & Parameter & Value & Parameter & Value  \\ \hline
  $K_{\Delta p}$ & $1$ & $K_{\Delta v}$ & $2$  & $\Upsilon$ & $0.9$   \\ \hline
  $\lambda_1,\lambda_2$ & $1.5$ & $a$ & $0.2$ & $b$ & $1$  \\  \hline
  $\gamma_{\Delta p}$ & $0.5$ & $\gamma_{\Delta v}$ & $0.5$ & $\Tilde{\gamma}$ & $0.5$\\  \hline
    \end{tabular}\label{table:parameters}
    \end{table}
\end{center}
\vspace{-10pt}
\begin{enumerate}

    \item From $t=t_0=0s$ to $t=t_1=10s$: the vehicles start with initial conditions that are different from the desired speed and the desired distance. No disturbance is acting on the leader vehicle, and its desired speed is the initial one, i.e. $\Bar{v} = 14m/s$.
    
    \item From $t=t_1=10s$ to $t=t_2=30s$: a disturbance acts on the acceleration of the first vehicle $i=0$. At $t_1$ a positive pulse of amplitude $4m/s^2$ and length $5s$ is considered, while a similar pulse with negative amplitude is considered at $t=15s$. The control input of $i=0$ being saturated, $i=0$ succeeds to properly counteract to it but it is not able to operate the needed corrective action to return to the desired speed. Since the disturbance is an external input, it is not communicated to the follower and can propagate along the platoon.
    
    \item From $t=t_2=30s$ to $t=t_3=60s$: the leader tracks a variable speed reference. From $t=30s$ to $t=45s$ the desired speed is $\Bar{v}=30m/s$, while from $t=45s$ to $t=60s$ it is $\Bar{v}=20m/s$. 
    
\end{enumerate}

Figures \ref{fig:ditances_2}, \ref{fig:speeds_2} and \ref{fig:accelerations_2} show, respectively, the inter-vehicular distance, speed and acceleration profiles for each vehicle of the platoon when the control input in (\ref{eq:control_input_2}) is implemented. In the first phase, the vehicles are shown to quickly converge to the desired speed and the desired distance. 
%

In the second phase, the controller of $i=1$ does not know \textcolor{black}{the correct acceleration value of vehicle $i=0$}. 
Also, the macroscopic variable is not available to it: for these reasons, it does not succeed to perfectly track the desired distance neither in case of positive disturbance between $t=10s$ to  $t=15s$ nor in case of negative one between $t=15s$ to  $t=20s$ (see Figure \ref{fig:ditances_2}). However, it converges to the same speed of $i=0$ after a small transient of three seconds in both cases (see Figure \ref{fig:speeds_2}). Finally, at $t=20s$ the disturbance is not active anymore and the leader can restore its desired speed. Also, $i=1$ receives correct information about its leader acceleration and is able to return to the ideal distance. The dynamical evolution of the remaining vehicles in the platoon during the generated transients after $t=10s$, $t=15s$ and $t=20s$ catches the contribution of the macroscopic information. To this purpose, let us consider the speed dynamics of the last vehicle in Figure \ref{fig:speeds_2}. It is possible to remark an anticipatory behaviour due to the macroscopic information resulting in a higher speed between $t=10s$ and $t=12s$ with respect to the leading vehicles
\textcolor{black}{. Then, } the decreasing of vehicles' speed along the platoon scales with respect to their position, resulting more stressed in the last vehicles (see between $t=12s$ and $t=14s$). The same anticipatory behaviour is shown in Figure \ref{fig:accelerations_2} with respect to the accelerations of the leading vehicles. The acceleration profiles better show how the vehicles along the platoon scale to intensify their accelerations and speeds, both for increasing and decreasing speed phases. An anticipatory behaviour is shown, both when the leading vehicles are accelerating and converging to the same speed. 
The same applies for transients taking place after $t=15s$ and $t=20s$, which are generated by the fast reaction of the leading vehicle to the disturbance.  
An anticipatory harmonizing acceleration for each vehicle scales along the platoon with respect to their knowledge of the macroscopic quantities, as shown in Figure \ref{fig:speeds_2} and \ref{fig:accelerations_2}. 

In the third phase, since there is no unknown perturbation acting on the platoon, the vehicles succeed to track the variable speed profile and to remain at the desired distance. No oscillations are shown by the proposed control law, even if the desired speed profile has high steps.

\begin{figure}[!]
    \centering
    \includegraphics[width = 1\columnwidth]{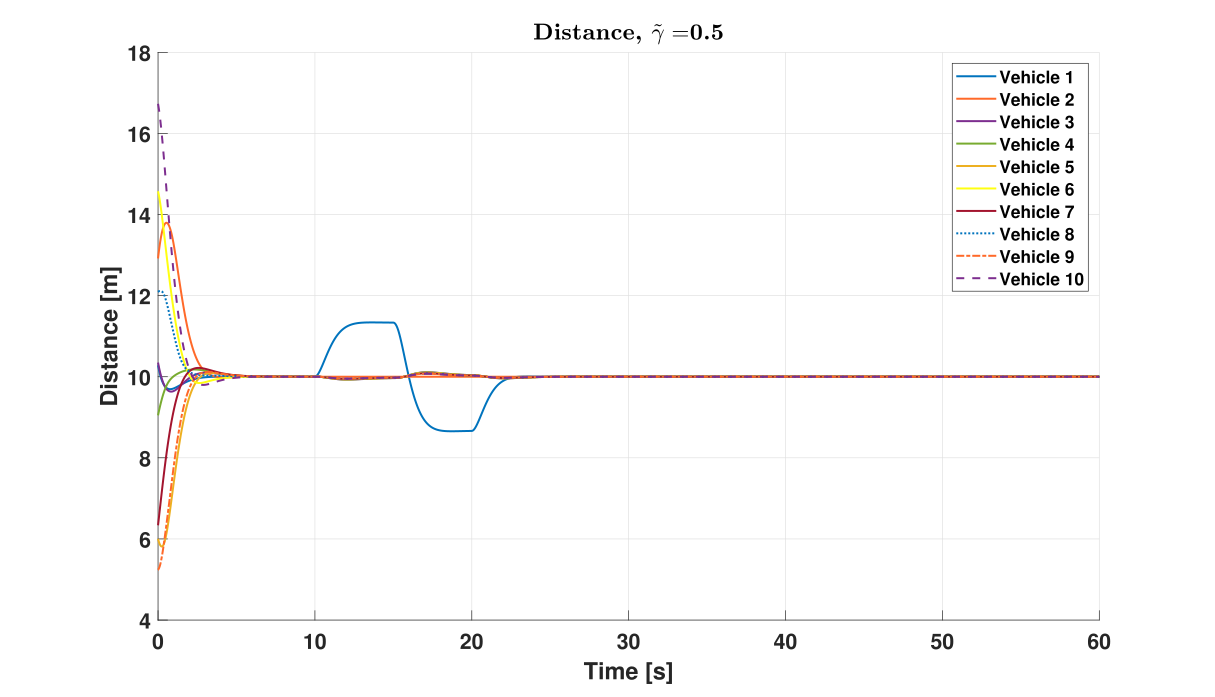}
    \caption{Control strategy for variable spacing policy: Distances.}
    \label{fig:ditances_2}
    \vspace{-15pt}
\end{figure}


\begin{figure}[!]
    \centering
    \includegraphics[width = 1\columnwidth]{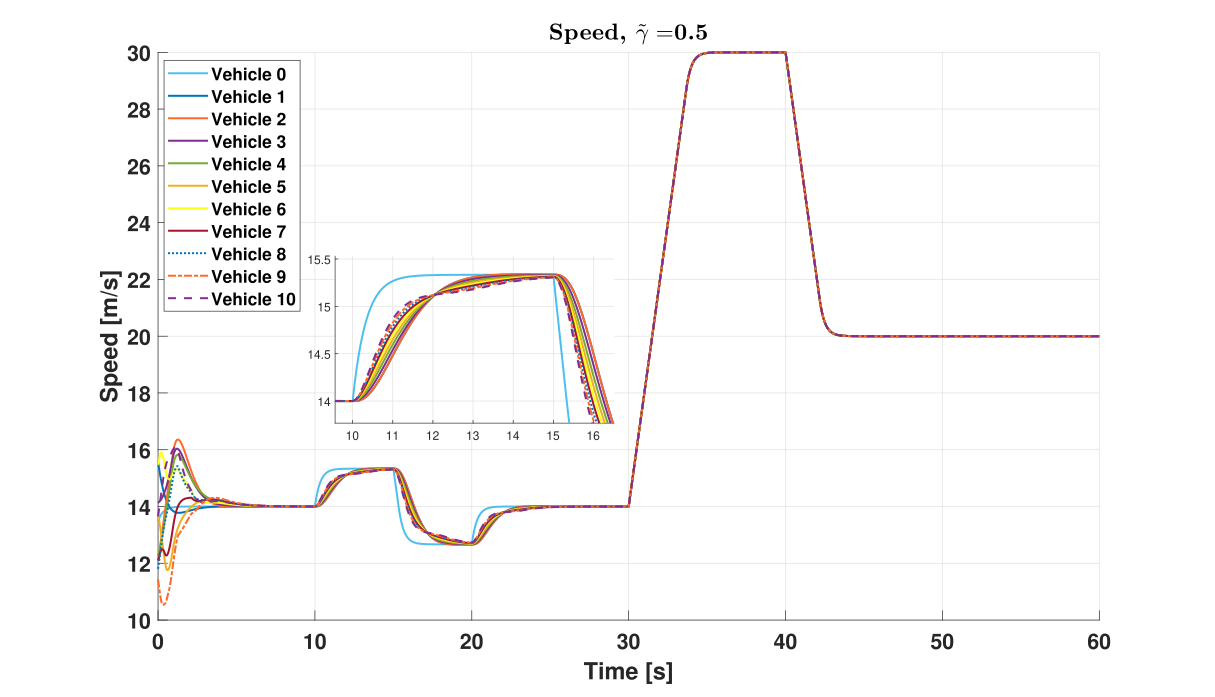}
    \vspace{-15pt}
    \caption{Control strategy for variable spacing policy: Speeds.}
    \label{fig:speeds_2}
    \vspace{-10pt}
\end{figure}


\begin{figure}[!]
    \centering
    \includegraphics[width = 1\columnwidth]{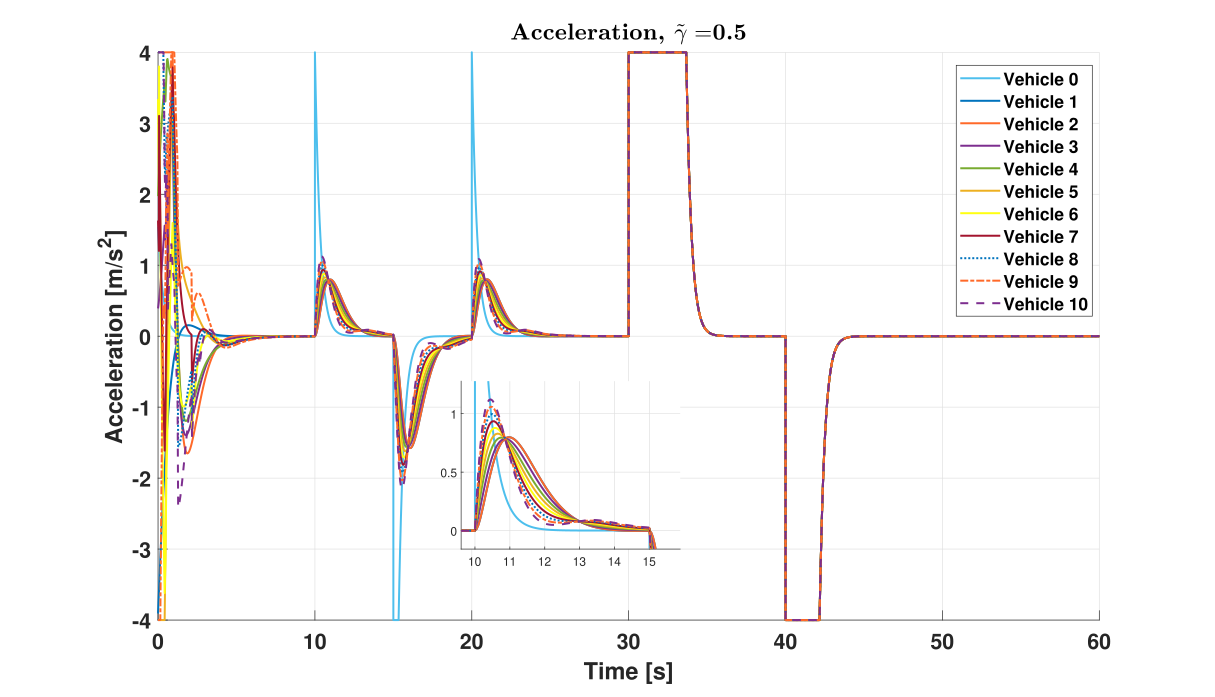}
    \vspace{-10pt}
    \caption{Control strategy for variable spacing policy: Accelerations.}
    \label{fig:accelerations_2}
    \vspace{-15pt}
\end{figure}


The proposed control laws in (\ref{eq:control_input_2}) exploit the information resulting from the macroscopic variable and safely control a platoon of vehicles. The control inputs provide transient harmonization on the whole traffic throughput while ensuring Asymptotic String Stability properties. The dynamical evolution results in a reduction of the oscillations propagation along the platoon, both in nominal case and in the presence of an active external disturbance. The utilization of the macroscopic information results to be a powerful tool. 


\section{Conclusions}\label{sct:conclusion}
This paper introduces macroscopic variables for ensuring the String Stability of a platoon of CACC autonomous vehicles. As the variance of microscopic quantities is related to the macroscopic density, the proposed stability analysis opens to the possibility of properly controlling a platoon by propagating only macroscopic density information. 
A control law based on information obtained by V2V communication has been proposed. The improvements resulting from taking into account macroscopic information are shown by simulation results. The proposed mesoscopic control law produces an anticipatory behaviour, which provides a better transient harmonization. Future work will focus on 
extending the proposed framework in a mixed traffic situation with non-communicating vehicles. \textcolor{black}{Also, due to the satisfactory results,  the next objective is to extend our approach to more complex models including non-idealities such as actuation and communication delays.}



\bibliographystyle{ieeetr}
\bibliography{biblioTraffic_20200820.bib}

\clearpage
\appendix
\section{Appendices}

\subsection{Proof of Lemma \ref{lmm:closed_loop_ISS_control2}}\label{apx:closed_loop_ISS_control2_proof}

Let us consider the candidate Lyapunov function (see \cite{B_khalil_2002}) $W_i = W(\Tilde{\chi}_i)$ for the $i$-th dynamical system $\Tilde{\chi}_i$, for $i\in\mathcal{I}_N^0$:
\begin{align}\label{eq:LyapunovFunction_ith_2}
    \nonumber  W(\Tilde{\chi}_i) &= \frac{1}{2}(\Delta p_i - \Delta p_i^r)^2 + \frac{1}{2}(\Delta v_i - \Delta v_i^r)^2 \\
    &\quad + \frac{1}{2}\rho_{1,i}^2 + \frac{1}{2}\rho_{2,i}^2.
\end{align}
%
It can be proven that (\ref{eq:LyapunovFunction_ith_2}) is bounded by two quadratic functions as follows
\begin{equation}\label{eq:W_i_bounded}
    \underline{\alpha}|\Tilde{\chi}_i|^2 \leq W(\Tilde{\chi}_i) \leq \Bar{\alpha}|\Tilde{\chi}_i|^2.
\end{equation}
To derive the constants $\underline{\alpha} > 0$ and $\Bar{\alpha} > 0$ we replace the expressions of $\Delta p_i^r$ defined in (\ref{eq:mesoscopic_spacing_policy}) and of $\Delta v_i^r$ in (\ref{eq:delta_v_ref_i}) in the definition of $W_i$ in (\ref{eq:LyapunovFunction_ith_2}). Then, 
\begin{align}
    \nonumber & W(\Tilde{\chi}_i) = \frac{1}{2}(\Delta p_i + \Delta\Bar{p} + \rho_{1,i})^2 \\
    \nonumber & \ + \frac{1}{2}(\Delta v_i - \lambda_1\rho_{1,i} + \rho_{2,i} + K_{\Delta p}(\Delta p_i + \Delta\Bar{p} + \rho_{1,i}))^2 \\
    \nonumber & \ + \frac{1}{2}\rho^2_{1,i} + \frac{1}{2}\rho^2_{2,i} \\
    & \ = \frac{1}{2} \Tilde{\chi}_i^T \underbrace{\left[ 
    \begin{array}{cccc}
        1+K^2_{\Delta p} & p_1 & p_{2} & p_1 \\
        0 & 1 & p_3 & 2 \\
        0 & 0 & 2+(\lambda_1-K_{\Delta p})^2 & p_3 \\
        0 & 0 & 0 & 2
    \end{array}
    \right]}_{P_W} \Tilde{\chi}_i
\end{align}
where
\begin{align*}
    &p_1 = 2K_{\Delta p}, \ p_2 = 2(1+K^2_{\Delta p}-\lambda_1 K_{\Delta p}),
    \\ &p_3 = 2(K_{\Delta p}-\lambda_1).
\end{align*}
By defining $\lambda_{\min}(P),\lambda_{\max}(P)$ respectively the minimum and maximum eigenvalues of the generic matrix P, then,
\begin{equation}\label{eq:W_i_lambda_bounds}
    \frac{1}{2}\lambda_{\min}(P_W)|\Tilde{\chi}_i|^2 \leq W(\Tilde{\chi}_i) \leq \frac{1}{2}\lambda_{\max}(P_W)|\Tilde{\chi}_i|^2.
\end{equation}
Since $K_{\Delta p},\lambda_1,\lambda_2 > 0$, then 
\begin{equation}\label{eq:W_i_lambda_definitions}
    \lambda_{\min}(P_W) = 1, \ \lambda_{\max}(P_W) = \max\{ 1+K^2_{\Delta p} , 2+(\lambda_1-K_{\Delta p})^2 \}.
\end{equation}
From (\ref{eq:W_i_lambda_bounds}) and (\ref{eq:W_i_lambda_definitions}) it follows
\begin{equation}\label{eq:conditions_alpha_i}
    \underline{\alpha} = \frac{1}{2}, \quad \Bar{\alpha}  = \frac{1}{2}\max\{ 1+K_{\Delta p}^2 , 2+(\lambda_1-K_{\Delta p})^2 \}.
\end{equation} 
%
%
The time derivative of $W_i$ in (\ref{eq:LyapunovFunction_ith_2}) verifies:
\begin{align}\label{eq:dot_LyapunovFunction_ith_2_inequality1}
    \nonumber \dot{W}_i &= -K_{\Delta p}(\Delta p_i - \Delta p_i^r)^2 -K_{\Delta v}(\Delta v_i - \Delta v_i^r)^2  \\
    \nonumber &\quad -\lambda_1\rho^2_{1,i} - \lambda_2\rho^2_{2,i} + \rho_{1,i}\rho_{2,i} \\
    \nonumber &\quad  + \rho_{2,i}(a\psi^{i-1}_{\Delta p} + b\psi^{i-1}_{\Delta v}) \\
    \nonumber & = - \Tilde{\chi}_i^T \underbrace{\left[ 
    \begin{array}{cccc}
        q_1 & 2K_{\Delta p}K_{\Delta v} & q_2 & 2K_{\Delta p}K_{\Delta v} \\
        0 & K_{\Delta v} & q_3 & 2K_{\Delta v} \\
        0 & 0 & q_4 & q_5 \\
        0 & 0 & 0 & \lambda_2+K_{\Delta v}
    \end{array}
    \right]}_{Q_{W}}
    \Tilde{\chi}_i \\
    \nonumber & \ + |\Tilde{\chi}_i|(a|\psi^{i-1}_{\Delta p}| + b|\psi^{i-1}_{\Delta v}|) \\
    &\leq -{\alpha}|\Tilde{\chi}_i|^2 + |\Tilde{\chi}_i|(a|\psi^{i-1}_{\Delta p}| + b|\psi^{i-1}_{\Delta v}|)
\end{align}
where 
\begin{align}
    \alpha &= \lambda_{\min}(Q_W) = \min\{ q_1 , K_{\Delta v} , q_4 , \lambda_2+K_{\Delta v} \}, \label{eq:alpha} 
\end{align}
with
\begin{align*}
    & q_1 = K_{\Delta p}(1+K_{\Delta p}K_{\Delta v}), \\
    & q_2 = 2K_{\Delta p}(1+K_{\Delta v}(K_{\Delta p}-\lambda_1)), \\
    & q_3 = 2K_{\Delta v}(K_{\Delta p}-\lambda_1), \\
    & q_4 = K_{\Delta p}+\lambda_1+K_{\Delta v}(\lambda_1-K_{\Delta p})^2, \\
    & q_5 = 1-2K_{\Delta v}(K_{\Delta p}-\lambda_1).
\end{align*}
%
We proceed to prove some inequalities with respect to functions $\psi_{\Delta p}^i$ and $\psi^i_{\Delta v}$ by exploiting the variance property given below. Let $l\in\{1,...,m\}$ and $y_l\in\mathbb{R}$. Then, the variance with respect to the set of values $y_l$ satisfies the property 
\begin{equation}\label{eq:variance_property}
    \sigma_y^2 \leq \frac{1}{4}(\max_l y_l - \min_l y_l)^2.
\end{equation}
Since we consider the dynamics in (\ref{eq:closed_loop_platoon_short_0}) and (\ref{eq:closed_loop_platoon_short_i}) with respect to $\Tilde{\chi}_i = \hat{\chi}_i-\hat{\chi}_{e,i}$, let us define $\Delta\Tilde{p}_i = \Delta p_i -  \Delta\Bar{p}$, then
\begin{align}
    \nonumber |\psi^i_{\Delta p}| &\leq \gamma_{\Delta p}\sqrt{\sigma^2_{\Delta p}} \\
    \nonumber &\leq \frac{1}{2}\gamma_{\Delta p}|\max_{j=0,...,i} \Delta\Tilde{p}_j - \min_{j=0,...,i} \Delta\Tilde{p}_j| \\
    \nonumber &\leq \gamma_{\Delta p}\max_{j=0,...,i}|\Delta\Tilde{p}_j| \\
    %
    %
    &\leq \gamma_{\Delta p}\max_{j=0,...,i}|\Tilde{\chi}_j|
\end{align}
where we have exploited the relationship
\begin{align*}
    |\max_j \Delta\Tilde{p}_j| & \leq \max_j |\Delta\Tilde{p}_j| \\
    |\min_j \Delta\Tilde{p}_j| & \leq \max_j |\Delta\Tilde{p}_j|
\end{align*}
By applying the same methodology, is proven the inequality
\begin{equation}
    |\psi^i_{\Delta v}| \leq \gamma_{\Delta v}\max_{j=0,...,i}|\Tilde{\chi}_j|
\end{equation}
Then,
\begin{equation}\label{eq:psi_property}
    a|\psi^i_{\Delta p}| + b|\psi^i_{\Delta v}| \leq (a\gamma_{\Delta p}+b\gamma_{\Delta v})\max_{j=0,...,i}|\Tilde{\chi}_j|.
\end{equation}
%
%
Define 
\begin{equation}\label{eq:gammatilde_parameters}
    d = a\gamma_{\Delta p}+b\gamma_{\Delta v} > 0,  \ \ \Upsilon\in(0,1),
\end{equation}
then for the time derivative of $W_i$ in (\ref{eq:dot_LyapunovFunction_ith_2_inequality1}) the following holds
%
%
\begin{align}\label{eq:dot_LyapunovFunction_ith_2_inequality2}
    \nonumber \dot{W}_i &\leq -\alpha|\Tilde{\chi}_i|^2 + d|\Tilde{\chi}_i| \max_{j=0,...,i-1}|\Tilde{\chi}_j| + \Upsilon\alpha|\tilde{\chi}_i|^2 - \Upsilon\alpha|\tilde{\chi}_i|^2 \\
    &\leq -(1-\Upsilon)\alpha|\Tilde{\chi}_i|^2, \quad \forall \ |\Tilde{\chi}_i| \geq \frac{d}{\alpha \Upsilon} \max_{j=0,...,i-1}|\Tilde{\chi}_j|.
\end{align}
Since $\alpha > 0$, the inequality in (\ref{eq:dot_LyapunovFunction_ith_2_inequality2}) satisfies the Input-to-State Stability (ISS) condition (see \cite{B_khalil_2002}). According to \cite[Theorem~4.19]{B_khalil_2002}, the inequality in (\ref{eq:closed_loop_platoon_ISS_ineq2}) is verified. Moreover, 
\begin{equation}\label{eq:gammatilde}
    \gamma(s) = \Tilde{\gamma}s  \ \forall \ s \geq 0, \ \Tilde{\gamma} = \sqrt{\frac{\Bar{\alpha}}{\underline{\alpha}}}\frac{d}{\alpha \Upsilon} > 0. 
\end{equation}
Since the parameters $a,b \geq 0$ in the dynamics of $\rho_i$ in (\ref{eq:rho_dynamic_system}) can be arbitrarily selected, the constant $d$ defined in (\ref{eq:gammatilde_parameters}) can be chosen such that $\Tilde{\gamma}$ in (\ref{eq:gammatilde}) belongs to $(0,1)$.

\begin{flushright}
    $\square$
\end{flushright}

\subsection{Proof of Theorem \ref{thm:string_stability_control2}}\label{apx:string_stability_control2_proof}

The first part of the proof is based on the forward recursive application of the ISS property in Lemma \ref{lmm:closed_loop_ISS_control2} through an inductive method.
\noindent   
For $i=0$:
\begin{align}
  |\Tilde{\chi}_0(t)| &\leq \beta(|\Tilde{\chi}_0(0)|,t), \ \forall \ t \geq 0.
\end{align}
For $i=1$:
\begin{equation}
    |\Tilde{\chi}_1(t)| \leq \beta(|\Tilde{\chi}_1(0)|,t)+\Tilde{\gamma}|\Tilde{\chi}_0(\cdot)|^{[0,t]}_\infty, \ \forall \ t \geq 0,
\end{equation}
where $|\Tilde{\chi}_0(\cdot)|^{[0,t]}_\infty \leq \beta(|\Tilde{\chi}_0(0)|,0)$. Defining $|\Tilde{\chi}_M(0)| = \max\{ |\Tilde{\chi}_0(0)| , |\Tilde{\chi}_1(0)| \}$, then for both $i=0$ and $i=1$:
\begin{align}
    |\Tilde{\chi}_0(t)| &\leq \beta(|\Tilde{\chi}_M(0)|,0), \ \forall \ t \geq 0, \\
    |\Tilde{\chi}_1(t)| &\leq \beta(|\Tilde{\chi}_M(0)|,0)(1+\Tilde{\gamma}), \ \forall t \ \geq 0.
\end{align}
For $i=2$:
\begin{align}
    \nonumber |\Tilde{\chi}_2(t)| &\leq \beta(|\Tilde{\chi}_2(0)|,t) \\
    & \quad +\Tilde{\gamma}\max_{j=0,1}|\Tilde{\chi}_j(\cdot)|^{[0,t]}_\infty, \ \forall \ t \geq 0.
\end{align}
Defining $|\Tilde{\chi}_M'(0)| = \max_{j=0,1,2}\{ |\Tilde{\chi}_j(0)| \}$, since $\Tilde{\gamma} > 0$, then
\begin{align}
    |\Tilde{\chi}_0(t)| \leq \beta(|\Tilde{\chi}_M'(0)|,0)(1+\Tilde{\gamma}), \ \forall \ t \geq 0, \\
    |\Tilde{\chi}_1(t)| \leq \beta(|\Tilde{\chi}_M'(0)|,0)(1+\Tilde{\gamma}), \ \forall \ t \geq 0 ,
\end{align}
and
\begin{equation}
    |\Tilde{\chi}_2(t)| \leq \beta(|\Tilde{\chi}_M'(0)|,0)(1+\Tilde{\gamma}+\Tilde{\gamma}^2), \ \forall t \geq 0.
\end{equation}
By recursively applying these steps, and since $\Tilde{\gamma}\in(0,1)$ for hypothesis, for each $i\in\mathcal{I}_N^0$ we state:
\begin{align}
    \nonumber |\Tilde{\chi}_i(t)| &\leq \beta\left( \max_{j=0,...,i}|\Tilde{\chi}_j(0)|,0 \right)\sum_{j=0}^i \Tilde{\gamma}^j \\
    \nonumber &\leq \beta\left( \max_{j=0,...,i}|\Tilde{\chi}_j(0)|,0 \right)\sum_{j=0}^\infty \Tilde{\gamma}^j \\
    &\leq \frac{1}{1-\Tilde{\gamma}}\beta\left( \max_{j=0,...,i}|\Tilde{\chi}_j(0)|,0 \right), \ \forall \ t \geq 0.\label{eq_Chi_bounded}
\end{align}
Then
\begin{equation}\label{eq:simple_string_stability_control1}
    \max_{i\in\mathcal{I}_N^0}|\Tilde{\chi}_i(t)| \leq \frac{1}{1-\Tilde{\gamma}}\beta\left( \max_{i\in\mathcal{I}_N^0}|\Tilde{\chi}_i(0)|,0 \right), \ \forall \ t \geq 0.
\end{equation}
Define $\omega(s) = \beta(s,0), \ s \geq 0$. By definition of $\mathcal{KL}$ functions, $\omega$ is $\mathcal{K}_\infty$ and hence invertible. Since (\ref{eq:simple_string_stability_control1}) holds for any $t \geq 0$, then 
\begin{equation}\label{eq_delta_forChi_bounded}
    \delta = \omega^{-1}((1-\Tilde{\gamma})\epsilon), \quad \forall \ \epsilon \geq 0.
\end{equation}
The value of $\delta$ in (\ref{eq_delta_forChi_bounded}) does not depend on the system dimension. From (\ref{eq_Chi_bounded}), (\ref{eq:simple_string_stability_control1}) and (\ref{eq_delta_forChi_bounded}), String Stability is ensured according to Definition \ref{def:TSS_string_stability}.

We focus now on the possibility to ensure Asymptotic String Stability.
%
%
This second part of the proof is based on a composition of Lyapunov functions (see \cite{B_khalil_2002}). We consider the function $W_i$ associated with the $i$-th dynamical system, for $i\in\mathcal{I}_N^0$, that is described in (\ref{eq:LyapunovFunction_ith_2}) and  satisfies the condition in (\ref{eq:W_i_bounded}).

Let us consider the time derivative of $W_i$ in (\ref{eq:dot_LyapunovFunction_ith_2_inequality1}).
\begin{color}{black}
    Since we consider the dynamics in (\ref{eq:closed_loop_platoon_short_0}) and (\ref{eq:closed_loop_platoon_short_i}) with respect to $\Tilde{\chi}_i = \hat{\chi}_i-\chi_{e,i}$, we define $\Delta\Tilde{p}_i = \Delta p_i -  \Delta\Bar{p}$, then for the macroscopic functions $\psi^i_{\Delta p}$ and $\psi^i_{\Delta v}$, the following inequalities are proved:
    \begin{align}
        \nonumber |\psi^i_{\Delta p}| &\leq \gamma_{\Delta p}\sqrt{\sigma^2_{\Delta p,i}} \\
        \nonumber &= \gamma_{\Delta p}\left( \frac{1}{i+1}\sum_{j=0}^{i} \Delta\Tilde{p}_j^2 - \frac{1}{(i+1)^2}\left( \sum_{j=0}^{i} \Delta\Tilde{p}_j \right)^2  \right)^{\frac{1}{2}} \\
        \nonumber &\leq \gamma_{\Delta p}\frac{1}{\sqrt{i+1}}\left( \sum_{j=0}^{i} \Delta\Tilde{p}_j^2 \right)^{\frac{1}{2}} \\
        &\leq \gamma_{\Delta p}\frac{1}{\sqrt{i+1}} \sum_{j=0}^{i}|\Delta\Tilde{p}_j|
    \end{align}
    where we have used the inequality $|x|_2 \leq |x|_1$. In the same way we can prove that 
    \begin{equation}
        |\psi^{i}_{\Delta v}| \leq \gamma_{\Delta v}\frac{1}{\sqrt{i+1}} \sum_{j=0}^{i} |\Delta v_j|
    \end{equation}
    Then,
    \begin{align}
        \nonumber a|\psi^i_{\Delta p}| &+ b|\psi^i_{\Delta v}| \leq \frac{1}{\sqrt{i+1}}\left( a\gamma_{\Delta p} \sum_{j=0}^{i}|\Delta\Tilde{p}_j| + b\gamma_{\Delta v} \sum_{j=0}^{i}|\Delta v_j| \right) \\
        \nonumber &= \frac{1}{\sqrt{i+1}} \sum_{j=0}^i \left| \left[
        \begin{array}{cccc}
            a\gamma_{\Delta p} & 0 & 0 & 0 \\
            0 & b\gamma_{\Delta v} & 0 & 0 \\
            0 & 0 & 0 & 0 \\
            0 & 0 & 0 & 0
        \end{array} \right] 
        \Tilde{\chi}_j
        \right|_1 \\
        &\leq \frac{2}{\sqrt{i+1}}\max\{a\gamma_{\Delta p},b\gamma_{\Delta v}\} \sum_{j=0}^i |\Tilde{\chi}_j|
    \end{align}
    %
\end{color}
%
Let $\hat{\chi}$ and $\hat{\chi}_e$ be the extended lumped state of the platoon and the extended equilibrium point respectively, defined in a similar way as (\ref{eq:platoon_state}) and (\ref{eq:platoon_equilibrium}). Let us consider $\Tilde{\chi} = \hat{\chi}-\hat{\chi}_e$ and the parameters $d_i>0$ to define a composite function $ W_c(\Tilde{\chi})$:
\begin{equation}\label{eq:composite_LyapunovFunction_2}
    W_c(\Tilde{\chi}) = \sum_{i=0}^N d_i W(\Tilde{\chi}_i) .
\end{equation}
It clearly verifies
\begin{equation}\label{eq:W_bounded}
    \underline{\alpha}_c|\Tilde{\chi}|^2 \leq W_c(\Tilde{\chi}) \leq \Bar{\alpha}_c|\Tilde{\chi}|^2
\end{equation}
where
\begin{equation}\label{eq:W_bounded_alphas}
    \underline{\alpha}_c = \min_{i\in\mathcal{I}_N^0} \{d_i\}\underline{\alpha}, \ \ \Bar{\alpha}_c=\max_{i\in\mathcal{I}_N^0}\{d_i\}\Bar{\alpha}.
\end{equation}
The time derivative of $W_c$ in (\ref{eq:composite_LyapunovFunction_2}) satisfies the inequality
\begin{equation}\label{eq:composite_LyapunovFunction_dot_2}
    \dot{W}_c(\Tilde{\chi}) \leq \sum_{i=0}^N d_i \left[ -\alpha |\Tilde{\chi}_i|^2 + \sum_{j=0}^{i-1}\Tilde{k}_{i} |\Tilde{\chi}_j||\Tilde{\chi}_i| \right].
\end{equation}
where 
\begin{align}
    \Tilde{k}_{0} &= 0 , \\
    \Tilde{k}_{i} &= \frac{2}{\sqrt{i}}\max\{a\gamma_{\Delta p},b\gamma_{\Delta v}\} > 0, \ i \geq 1.
\end{align}
\begin{color}{black}
    Let us introduce the operator  $\phi:\mathbb{R}^{2N+1}\rightarrow\mathbb{R}^{N+1}$, defined as
    \begin{equation}\label{eq:phi_definition}
        \phi(\Tilde{\chi}) = [ |\Tilde{\chi}_0|,|\Tilde{\chi}_1|,...,|\Tilde{\chi}_N| ]^T.
    \end{equation}
\end{color}
%
%
Then, equation (\ref{eq:composite_LyapunovFunction_dot_2}) can be rewritten as
\begin{equation}\label{eq:Wdot_bounded2}
    \dot{W}_c(\Tilde{\chi}) \leq -\frac{1}{2}\phi(\Tilde{\chi})^T(DS+S^TD)\phi(\Tilde{\chi})
\end{equation}
where
\begin{equation}
    D =  diag(d_0,d_1,...,d_N)
\end{equation}
and $S$ is an $N\times N$ matrix whose elements are
\begin{equation}
    s_{ij} = 
    \begin{cases}
        \alpha \:\: & i = j \\
        -\Tilde{k}_{i} \:\: & i < j \\
        0 \:\: & i > j
    \end{cases}
\end{equation}
For $\alpha > 0$, each leading principal minor of $S$ is positive and hence it is an $M$-matrix. By \cite[Lemma~9.7]{B_khalil_2002} there exists a matrix $D$ such that $DS+S^TD$ is positive definite. Consequently, $\dot{W}_c$ in (\ref{eq:Wdot_bounded2}) is negative definite. 
\begin{color}{black}
    It follows that $W_c$ in (\ref{eq:composite_LyapunovFunction_2}) is a Lyapunov function for the overall platoon system described by 
    (\ref{eq:car_following_dynamics_i_closed_loop_2}). Therefor, there exists a $\mathcal{KL}$ function $\beta_c : \mathbb{R}^+\times\mathbb{R}^+ \rightarrow \mathbb{R}^+$ such that
    \begin{equation}\label{eq:beta_c_inequality2}
        |\Tilde{\chi}(t)| \leq \beta_c(|\Tilde{\chi}(0)|,t), \:\: \forall \ t \geq 0.
    \end{equation}
    Condition in  (\ref{eq:beta_c_inequality2}) ensures the asymptotic stability:
    \begin{equation}\label{eq:limit_condition_foralli2}
        \lim_{t\rightarrow\infty}|\Tilde{\chi}_i(t)| = 0, \quad \forall \ i\in\mathcal{I}_N^0.
    \end{equation}
    The platoon system is proved to be String Stable by  (\ref{eq:simple_string_stability_control1}) and (\ref{eq_delta_forChi_bounded}). Consequently, for each $i\in\mathcal{I}_N^0$ the state evolution $|\Tilde{\chi}_i|$ is constrained by a bound that is independent from the system dimension. Furthermore, from (\ref{eq:limit_condition_foralli2}) Asymptotic String Stability is ensured according to Definition \ref{def:ATSS_string_stability}.
    
    
    
\end{color}

\end{document}